\algnewcommand{\LineComment}[1]{\State \(\triangleright\) \emph{\color{blue} #1}}
\algnewcommand{\Invariant}[1]{\State \(\triangleright\) \emph{\color{red} #1}}
\newtheorem{theorem}{Theorem}
\newtheorem{lemma}{Lemma}
\newtheorem{claim}{Claim}
\newtheorem{definition}{Definition}
\newcommand{\ignore}[1]{}
\newcommand{\calJ}{\mathcal{J}}
\newcommand{\calP}{\mathcal{P}}
\newcommand{\bx}{\mathbf{x}}
\newcommand{\ceil}[1]{{\left\lceil{#1}\right\rceil}}
\newcommand{\floor}[1]{{\left\lfloor{#1}\right\rfloor}}
\def\2plus{{\tt (++)}}
\def\3plus{{\tt (+++)}}
\def\4plus{{\tt (++++)}}
\def\5plus{{\tt (+++++)}}
\newlength{\algobox}
\colorlet{shadecolor}{blue!10}
\def\eq{\leftarrow}
\def\SUM{\mathsf{sum}}
\def\CUT{\mathsf{CUT}}
\def\RANK{\mathsf{rank}}
\def\ADDITIVE{\mathsf{add}}
\def\RANKP{\RANK}
\def\com{\mathsf{com}}
\def\calP{\mathcal{P}}
\def\rep{\mathsf{rep}}
\title{Learning Partitions using Rank Queries\footnote{The work was supported by NSF grants CCF-2041920, 2402571}}
\author{Deeparnab Chakrabarty \and Hang Liao}
\date{}
\begin{document}
	
	\maketitle
	
	\begin{abstract}%
	
	\noindent
	We consider the problem of learning an unknown partition of an $n$ element universe using rank queries. Such queries take as input a subset of the universe and return the number of parts of the partition 
	it intersects. We give a simple $O(n)$-query, efficient, deterministic algorithm for this problem. 
	We also generalize to give an $O(n + k\log r)$-rank query algorithm for a general partition matroid where $k$ is the number of parts and $r$ is the rank of the matroid.
\end{abstract}

\section{Introduction}

Let $V$ be a universe of $n$ elements and suppose there is an {\em unknown} partition $\calP = (P_1, \ldots, P_k)$ that we want to learn.
We have an oracle called $\RANK$ that takes as input any subset $S\subseteq V$ and returns the number of different parts this subset intersects. More precisely
$\RANK(S) := \sum_{i=1}^k \min(|S\cap P_i|, 1)$. How many queries suffice to learn $\calP$? 

This natural question 
is a special case of the problem of {\em learning hypergraphs} under the {\em additive query} model initially studied by~\cite{GrebinskiK00}. In this problem, we have an unknown hypergraph on a vertex set $V$, and an additive query $\ADDITIVE(T)$ on a subset $T\subseteq V$ returns 
the {\em number} of hyperedges completely contained in $T$. Our unknown partition $\calP$ is a special hypergraph whose $k$ hyperedges are disjoint (that is, it is a hypermatching);
and for any subset $S$ we observe that $\RANK(S)$ is precisely $k - \ADDITIVE(V\setminus S)$. And so, the problem we study can be rephrased as in how few additive queries can a hypermatching be learnt.
Although hypermatchings may feel too specialized, the now mature literature on {\em graph} learning (cf.~\cite{Choi13,BshoutyM12,BshoutyM11,Bshouty2011parity,ChoiK10}) began with understanding the case of graph matchings (cf.~\cite{GrebinskiK00,AlonBKRS02,AlonA05}). 

The problem we study is also a special case of a {\em matroid learning} problem with access to rank oracle queries. Matroids are set systems, whose elements are called {\em independent} sets, that are defined using certain axioms and these are fundamental objects in combinatorial optimization.
It is well known that a partition $\calP$ induces the following simple partition matroid: a subset $I\subseteq V$ is independent if $|I\cap P_i| \leq 1$ for all $i$. 
The rank of a matroid is the cardinality of the largest independent set of the matroid, and more generally, the rank of subset $S$ is the cardinality of the largest independent set that is a subset of $S$. A moment's notice shows that for the simple partition matroid this is precisely $\RANK(S)$ which explains the name we give to our oracle. So, our problem we study asks: in how few rank queries can a simple partition matroid be learnt?\smallskip

It is rather straightforward\footnote{Something that can be given in an undergraduate algorithms course when teaching binary search.} to learn the partition using $O(n\log k)$ queries as follows. First, one learns a representative from each part with $n$-queries; given 
a set of already learned representatives $R$, a vertex $v$ is in a new unrepresented part if and only if $\RANK(R\cup v) > \RANK(R)$. After learning the $k$ representatives, we can learn every other vertex's part by performing a binary search style algorithm. Can one do better?
It is instructive to note that the algorithm sketched above does not really utilize the full power of the query model we have. In particular, it would have sufficed if the query took a subset $S$ and said YES if every element in $S$ was in a different part, or NO otherwise. Using the matroid language, an {\em independence oracle} suffices which only states if a set $S$ is independent or not. 
Now, an independence oracle answer gives at most $1$ bit of information; on the other hand, there are roughly $k^n$ different partitions possible with $\leq k$ parts. Therefore, via an information theoretic argument $\Omega(n\log k)$ independence queries are {\em necessary} to learn the partition. In contrast, the {\em rank} oracle gives the {\em number} of different parts hit by a subset; this is an integer in $\{0,1,\ldots, k\}$ and the information theoretic argument only proves an $\Omega(n)$ lower bound on the number of queries. This naturally leads to the question: can an $O(n)$-query algorithm exist? The main result of this paper is a simple affirmative answer to this question.

\begin{theorem} \label{thm:partition}
	There is a deterministic, constructive algorithm that solves unknown partition learning problem using $O(n)$ $\RANK$ queries. 
\end{theorem}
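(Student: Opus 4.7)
The plan is to design a two-phase algorithm. In the first phase I would identify one representative per part using $n$ queries. Initialize $R = \emptyset$, then process the elements $v_1, v_2, \ldots, v_n$ in any order. For each $v_i$, query $\RANK(R \cup \{v_i\})$; if the value strictly exceeds $|R|$ then $v_i$ belongs to a yet-unseen part, so add it to $R$. At the end of this phase, $R = \{r_1, \ldots, r_k\}$ contains exactly one element from each part, which costs exactly $n$ queries and gives us the benchmark $\RANK(R) = k$.

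In the second phase I need to determine $p(v) \in [k]$ for each of the $n-k$ unclassified elements of $V \setminus R$ using only $O(n)$ further queries. A per-element binary search would cost $\log k$ queries each and blow the budget to $O(n \log k)$; the challenge is to shave the $\log k$ factor. The informational slack suggests this is possible: each query has $k+1$ possible answers and so carries up to $\log k$ bits, matching the total $\Theta(n \log k)$ bits needed to describe the partition. The goal is to design queries on batches of non-representative elements together with subsets of representatives so that each query does work proportional to its information content.

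The main idea I would pursue is parallel binary search with pooled, batched queries. Concretely, maintain a collection of pending buckets; each bucket $B$ is a set of unclassified elements together with a candidate subset $C(B) \subseteq R$ of representatives, with the invariant that every $v \in B$ satisfies $p(v) \in $ indices of $C(B)$. Initially there is one bucket $B_0 = V \setminus R$ with $C(B_0) = R$. To refine a bucket I split $C(B)$ into roughly equal halves $C_L, C_R$ and use rank queries of the form $\RANK(C_L \cup B')$ for subsets $B' \subseteq B$ to partition $B$ into $B_L, B_R$ with candidate regions $C_L, C_R$. Since $\RANK(C_L \cup B')$ equals $|C_L|$ plus the number of distinct parts of $B'$ whose indices lie in $C_R$, comparing with a pre-computed $\RANK(B')$ yields $|B' \cap B_R|$ directly, which supports a geometric-style split of $B$ into $B_L, B_R$ in a controlled number of queries.

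The main obstacle, and the technical heart of the proof, is arguing that these refinements amortize to $O(n)$ total queries rather than $O(n \log k)$. The natural potential function is $\Phi = \sum_{\text{pending } B} |B| \cdot \log |C(B)|$, which starts at $(n-k) \log k$ and must reach $0$. I expect the crux is to show that splitting a bucket $B$ of size $g$ with candidate set of size $c$ into two buckets each of candidate size $c/2$ can be done in $O(g / \log c)$ queries on average, so that each query decreases $\Phi$ by $\Omega(\log k)$ worth of entries; summing over the refinement tree then telescopes to the desired $O(n)$ bound. Making this amortization go through, and picking the right subsets $B'$ used in the splitting step so that every query recovers its full $\log k$ bits of information about the pending assignment, is the step I anticipate will require the most care.
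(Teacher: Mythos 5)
Your Phase 1 (learn one representative per part with $n$ queries) matches the ``warm-up'' the paper describes in its introduction for the easy $O(n\log k)$ bound. Phase 2 is where the paper departs completely from your plan, and there are genuine gaps in what you propose.

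\textbf{Collision problem in the $\SUM$ simulation.} You claim $\RANK(C_L\cup B')$ equals $|C_L|$ plus ``the number of distinct parts of $B'$ whose indices lie in $C_R$,'' and then read off $|B'\cap B_R|$. The word ``distinct'' is exactly the problem: two elements of $B'$ lying in the same $C_R$-part contribute $1$, not $2$, to the rank. So the query is a $\SUM$ query only when $B'$ contains at most one element per part, i.e. when $B'$ is an independent set in the hidden matroid. Your buckets have no such guarantee; $B$ can contain many elements from the same part and nothing in your scheme prevents a query set $B'$ from picking several of them. Without independence, you cannot invoke coin-weighing. The paper avoids this by construction: the objects being processed are always independent sets $I_1,I_2$, the $\SUM$ simulation is $\sum_{e\in S}\bx_e=|S|+|I_2|-\RANK(S\cup I_2)$ for $S\subseteq I_1$, which is valid \emph{because} $I_1$ and $I_2$ are independent, and collisions are discovered and resolved (via $\com(I_1,I_2)$ and the matching between $\com(I_1,I_2)$ and $\com(I_2,I_1)$) as part of the Merge step. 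Handling collisions is not a side detail; it is the core of the data structure.

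\textbf{The amortization does not reach $O(n)$.} Even if you could split a bucket of size $g$ in $O(g/\log g)$ queries (note: the natural coin-weighing bound depends on $g$, the length of the unknown vector, not on $c$, the candidate-set size; your $O(g/\log c)$ is not what Bshouty/Lindstr\"om-style results give), the level-by-level sum over the refinement tree blows up. At depth $i$ the buckets have candidate sets of size $\sim k/2^i$ and bucket sizes shrinking accordingly, and summing $g_B/\log g_B$ (or $g_B/\log c_B$) over the tree gives a harmonic-type sum $\sum_j 1/j$ over $\log k$ levels, i.e. $\Theta(n\log\log k)$, not $O(n)$. The small buckets at the bottom of the tree dominate and kill the amortization. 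Your potential $\Phi=\sum_B |B|\log|C(B)|$ starts at $(n-k)\log k$, but the minimum decrease per query is $\Omega(\log c)$ with $c$ as small as $2$, so the potential argument as stated gives only $O(n\log k)$.

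\textbf{What the paper actually does.} Rather than binary-searching over representatives, the paper maintains a collection of disjoint independent sets (initially all singletons) and repeatedly $\Merge$s two sets of comparable cardinality, replacing them with $I_1\cup I_2\setminus\com(I_1,I_2)$ and recording $\rep(e)$ for each evicted $e$. A single Merge costs $O\bigl(d\log(\max(k_1,k_2)/d)/\log d\bigr)$ rank queries where $d=|\com(I_1,I_2)|$, using Bshouty to find $\com$ and Grebinski--Kucherov to recover the matching. The amortization then splits merges into \emph{thick} ($d\geq\sqrt{k_1}$: cost $O(d)$, charged to the $d$ evicted elements, summing to $O(n)$) and \emph{thin} ($d<\sqrt{k_1}$: the merged set nearly doubles, so each element participates in $O(1)$ thin merges per size-class, and the per-class cost $O(2^{t/2})$ sums geometrically to $O(n)$). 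A final $O(\log n)$ sequential merges wrap up the remaining incomparable sets at $O(n)$ total cost. This ``merge similar sizes, charge to geometric doubling'' scheme is what gets you to $O(n)$; the binary-search-with-pooling route you sketch, even with the collision issue fixed, appears to stall at $O(n\log\log k)$.
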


\noindent
{\em Remark.}
\emph{We have not optimized the constant in front of $n$. We think it can be made less than $10$ but don't believe can be made less than $4$ using our methods. The best lower bound one can prove using the above information theory argument is $n$.
	Figuring out the precise coefficient is left as an open question.}	\medskip

\noindent
We also consider the generalization of learning a {\em general} partition matroid using rank queries. In this case, along with the unknown partition $\calP$, we have unknown positive integers
$r_1, \ldots, r_k$ associated with each part, where $1\leq r_i < |P_i|$. A subset $I$ is independent in this matroid if $|I\cap P_i| \leq r_i$, for all $1\leq i\leq k$. When all $r_i=1$, we have the simple partition matroid. The rank query corresponds to $\RANK(S) := \sum_{i=1}^k \min(|S\cap P_i|, r_i)$. In how few rank queries can we learn a general partition matroid? 

As in the simple partition matroid case, one can get an $O(n\log k)$-query algorithm using just an independent set oracle
via a more delicate\footnote{Maybe a challenging exercise in the aforementioned algorithms course; see~\Cref{sec:genpart} for this algorithm.} binary-search-style algorithm.  Can we obtain $O(n)$ query algorithm with rank queries? We believe the answer should be yes and take the following first step.

\begin{restatable}{theorem}{matroid} \label{thm:matroid}
	There is a deterministic, constructive algorithm that learns a general partition matroid using $O(n + k\log r)$ $\RANKP$ queries where $r := \RANK(V) = \sum_{i}r_i$. 
\end{restatable}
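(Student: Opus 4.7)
The plan is to reduce to two invocations of Theorem~\ref{thm:partition} by first identifying one representative per part from both the basis and its complement. Phase 1 is trivial: in $n$ queries, build a basis $B$ by the standard greedy rule (for each $v \in V$, add $v$ to $B$ iff $\RANK(B \cup \{v\}) > \RANK(B)$), producing an independent set with $|B \cap P_j| = r_j$ for every $j$, so $|B| = r$.

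Phase 2 produces a transversal $T = \{t_1, \ldots, t_k\} \subseteq B$ and a matching representative set $R = \{v_1, \ldots, v_k\} \subseteq V \setminus B$ with $t_j, v_j \in P_j$ for each part $P_j$. The key identity: for any $T \subseteq B$ and $v \in V \setminus B$ lying in part $P_i$,
\[
\RANK\bigl((B \setminus T) \cup \{v\}\bigr) = (r - |T|) + \mathbf{1}\bigl[T \cap (B \cap P_i) \ne \emptyset\bigr],
\]
since $B \setminus T$ has $r_j - |T \cap B \cap P_j|$ elements in each $P_j$, and adding $v$ raises the $P_i$-contribution by one precisely when $T$ already holds a slot from $B \cap P_i$. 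Sweeping over $V \setminus B$, each $v$ costs one test query; if the test declares $v$'s part new (i.e.\ $T \cap (B \cap P_i) = \emptyset$), binary-search over $B \setminus T$ using the same identity on candidate halves to isolate one element $t \in B \cap P_i$ in $\log r$ queries, append it to $T$, and record $v$ as $v_i$. Because $r_j < |P_j|$ ensures every part has a representative in $V \setminus B$, after $(n - r) + k \log r$ queries we obtain $|T| = |R| = k$.

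Using $T$ and $R$, I can simulate a simple-partition rank oracle on each of $B$ and $V \setminus B$. For any $S \subseteq V \setminus B$,
\[
\RANK\bigl((B \setminus T) \cup S\bigr) - (r - k) = \#\{j : S \cap P_j \ne \emptyset\},
\]
because $B \setminus T$ contributes exactly $r_j - 1$ elements per part, so the ``$\min$ with $r_j$'' jumps by one exactly when $S$ meets $P_j$. For any $S \subseteq B$, a parallel calculation using $|R \cap P_j| = 1$ gives
\[
\RANK\bigl((B \setminus S) \cup R\bigr) = r - |S| + \#\{j : S \cap P_j \ne \emptyset\},
\]
so subtracting $r - |S|$ recovers the simple rank. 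Feeding these simulators into Theorem~\ref{thm:partition} as a black box yields the partition $\{Q_1, \ldots, Q_k\}$ of $V \setminus B$ in $O(n-r)$ queries and the partition $\{B_1, \ldots, B_k\}$ of $B$ in $O(r)$ queries. I then stitch the two via Phase 2's explicit pairings: $t_j$ lies in a unique $B_{\sigma(j)}$ and $v_j$ in a unique $Q_{\tau(j)}$, so $P_j = B_{\sigma(j)} \cup Q_{\tau(j)}$ and $r_j = |B_{\sigma(j)}|$.

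The total is $n + (n - r + k \log r) + O(r) + O(n - r) = O(n + k \log r)$. The main obstacle I anticipate is carefully verifying the two simulator identities in the corner case $r_j = 1$, where the ``slot'' accounting is tightest and one has to argue the two boundary behaviors of the $\min$ are consistent in both directions; a secondary concern is confirming that Theorem~\ref{thm:partition} can be invoked as a black box on a universe $U$ using only a simulated $\RANK$ oracle over subsets of $U$, which should be unproblematic since that theorem is phrased purely in terms of $\RANK$ queries on the given ground set.
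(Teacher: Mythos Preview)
Your proposal is correct and follows essentially the same approach as the paper: build a basis greedily, extract a pair of transversals $(T,R)$ (the paper's $(T_1,T_2)$) via the same rank identity and binary search, then simulate simple-partition rank oracles on $B$ and $V\setminus B$ to invoke Theorem~\ref{thm:partition} twice and stitch via the recorded pairings. Your two simulator identities coincide with the paper's $\RANK_1$ and $\RANK_2$, and the $r_j=1$ corner case you flag is harmless under the computation you give.
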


\noindent
\emph{Remark. When the number of parts $k\leq n/\log n$, we thus get an $O(n)$-rank query algorithm. However, when $k = \Omega(n)$ we don't do any better than just with independence queries. 
}

\paragraph*{Perspective.} Our motivation to look at the problem arose from trying to understand the {\em connectivity} question in hypergraphs using $\CUT$ queries.
Although, as mentioned earlier, graph learning under query models has been extensively studied, over the last few years, multiple works such as~\cite{RubinsteinSW18,Graur,LeeMS21,AuzaL21,Apers,ChakrabartyL23,Liao24a} have focused on trying to understand if fewer queries can lead to understanding {\em properties} of graphs.
Of particular interest is understanding the {\em connectivity}/finding spanning forest of a graph using $\CUT$ queries. 
A $\CUT$ query takes a subset of vertices as input and returns the number/weight of the cut edges crossing the subset.
While graph learning can take $\tilde{\Theta}(m)$ cut queries, a spanning forest of an undirected graph, unweighted or weighted, can be constructed\footnote{using a randomized Las Vegas algorithm which makes $O(n)$ queries in expectation} in $O(n)$ queries (see~\cite{Apers,Liao24a}). Can such results be generalized\footnote{At first glance even a polynomial query algorithm may not be clear; a little thought can lead to an $O(n\log n)$ query algorithm.} to hypergraphs? To us, the easiest case of a hypergraph was the hypermatching whose only spanning forest is the hypergraph itself. It is not too hard to see that $\CUT$ queries and $\RANK$ queries are intimately related. Formally, after $n$ cut queries, any rank query can be simulated with $2$ cut queries.
The interesting open question is: {\em can the connectivity question of an arbitrary hypergraph be solved in $O(n)$ queries?}
%
%

The other related problem is {\em matroid intersection}. Given rank/independence oracle to two matroids over the same universe, the matroid intersection problem asks to find 
the largest common independent set. It is a classic result in combinatorial optimization due to~\cite{Edmonds72} that this can be solved in polynomially many independence oracle queries.
The current state-of-the-art is that $\tilde{O}(n^{1.5})$-rank queries suffice (see~\cite{ChakLSSW19}) and $\tilde{O}(n^{7/4})$-independence oracle queries suffice (see~\cite{Bilkstad21}). 
On the other hand, no {\em super-linear} lower bounds are known for rank-queries, and only recently, \cite{BilkMNT23} proved an $\Omega(n\log n)$-lower bound for independence queries.
The big open question is: {\em can matroid intersection be solved in $O(n)$ rank-queries, or can a $\omega(n)$-lower bound be proved}? 

As noted earlier, if we wish to obtain an $o(n\log k)$-query algorithm, we must exploit the fact that $\RANK$-queries output ``more'' than the independence oracle queries. 
Our second motivation in writing this paper is to showcase how the techniques that arise from {\em coin weighing} problems a la \cite{Cantor,Lindstrom} exploit this ``more''.
In the basic coin-weighing problem, one is asked to recover an unknown Boolean vector $x$ with the ability to query any subset $S$ and obtain $\sum_{i\in S}x_i$ (a sum-query). The aforementioned papers showed how to do this making roughly $2n/\log_2 n$ sum-queries. 
\cite{Bshouty09} generalized this to learn a Boolean vector with at most $d$ ones in roughly $\frac{2d\log_2 n}{\log_2 d}$ queries. In a different application, \cite{GrebinskiK00} showed how to use the coin-weighing result to learn a hidden perfect matching in a bipartite graph using $2n$ $\CUT$ queries. These form the backbone of our algorithms.
Having said that, there are some big differences between sum-queries and rank-queries since the latter is not ``linear'' and 
this underlies the difficulties we've faced in generalizing~\Cref{thm:matroid} to obtain a $O(n)$-query algorithm to the general partition matroid case.

\subsection{Related works}

There is a vast literature on combinatorial search~\cite{Aigner88,DuH00}, and we restrict ourselves to the works that are related the most.
As mentioned above, our problem can be thought of as learning a {\em hypermatching} using additive/cut queries. (Hyper)-graph reconstruction questions have been widely studied in the last two decades.
A significant body of work has been dedicated to reconstructing graphs using queries, as evidenced by the works of (cf. \cite{GrebinskiK00,AlonBKRS02,AlonA05,ReyzinS07,ChoiK10,Mazzawi10,BshoutyM11,BshoutyM12,Choi13}). 
These efforts encompass various types of graphs, including unweighted graphs, graphs with positive weights, and graphs with non-zero edge weights, using $\CUT$ queries.
This has culminated in a result of~\cite{Choi13} gives an polynomial time, randomized $O(\frac{m\log n}{\log m})$-query algorithm for learning graphs on $n$ nodes and $m$ edges with non-zero edge weights, and this query complexity is information theoretic optimal. 
Concurrently, there has been ongoing research on recovering specific structures within graphs without necessarily reconstructing the entire graph, such as figuring its connectivity (see~\cite{GrebinskiK00, ChakrabartyL23, Liao24a}).  \cite{Angluin05} started the research on learning a hypergraph using edge-detecting queries, that is, whether the input set contains a hyperedge or not; they described algorithms for $r$-uniform hypergraphs (every hyperedge has exactly $r$ vertices) but the dependence on $r$ was exponential. \cite{BshoutyM10} considered the {\em additive} model where one gets the number of edge (this was mentioned in the Introduction above) and proved existence of algorithms to 
learn rank $d$ hypergraphs (every hyperedge has at most $d$ vertices) for constant $d$ using $O_d(m\log(n^d/m)/\log m)$-queries; the dependence on $d$ is exponential.
%
\cite{Balkanski22} considered high-rank but low-degree hypergraphs, including hypermatchings. The focus was on edge-detecting queries (indicator whether additive query is zero or non-zero), and they gave $O(n\mathrm{polylog} n)$-query algorithms which were also ``low depth'', that is, with few rounds of adaptivity. 

Our problem is also related to the problem of recovering a {\em clustering} with active queries (see~\cite{AshKB16,MazS17,SahS19,AilBJ18,ChiZL19,BreCLP21,Liu22}). The setting is the same: there is a universe of $n$ points which we assume is clustered into $k$ unknown parts. The query model, however, is often quite different and much more restrictive usually constraining queries to asking whether a pair or a constant number of elements are in the same cluster/part or not. Such a study was initiated in the works of~\cite{DavKMR14,AshKB16,MazS17} which prove an $\Omega(nk)$ lower bound, and then provide better upper bounds with extra assumptions. The above-cited works continue on this line. 

\section{$O(n)$ Query Deterministic Algorithm}\label{sec:simpart}

Throughout the rest of the paper, unless otherwise mentioned all logarithm base is $2$.
We begin with an overview of the algorithm. 
We maintain a collection $\calJ$ of disjoint independent sets; recall that a subset is independent if it contains at most one element from each part. Initially, $\calJ$ is the collection of $n$ independent sets each of which is a single element.
Let $J$ denote the union of all these independent sets, and so, initially $J = V$. 
The algorithm will modify this collection $\calJ$ in iterations, removing some elements from $J$ while doing so. Anytime such an element $e$ is removed, we maintain a map $\rep(e)$ to an element in the current $J$
with the property that $e$ and $\rep(e)$ are in the same partition in $\calP$. We will call two elements in the same parts ``friends'', and so, $\rep(e)$ is $e$'s friend.

The key routine in the algorithm is a merge operation over independent sets. 
Given two independent sets $I_1, I_2$, define the set of {\em common nodes} $\com(I_1, I_2) := \{v_1 \in I_1: \exists v_2 \in I_2, P_i, \{v_1,v_2\} \subseteq P_i\}$
to be the subset of nodes in $I_1$ which have a friend in $I_2$.
Note that this friend needs to be unique since $I_2$ is independent. The {\sc Merge} operation takes two independent sets $I_1$ and $I_2$
and then (a) finds the set $\com(I_1, I_2)$, (b) for each $e\in \com(I_1, I_2)$, finds its unique neighbor $\rep(e) \in \com(I_2, I_1)$, and (c) returns $\com(I_1, I_2), \com(I_2, I_1)$, and $I_3 := I_1 + I_2 - \com(I_1, I_2)$. See \Cref{fig1} for an illustration. 

\begin{figure}[h]
    \centering
    \includegraphics[width=0.75\textwidth]{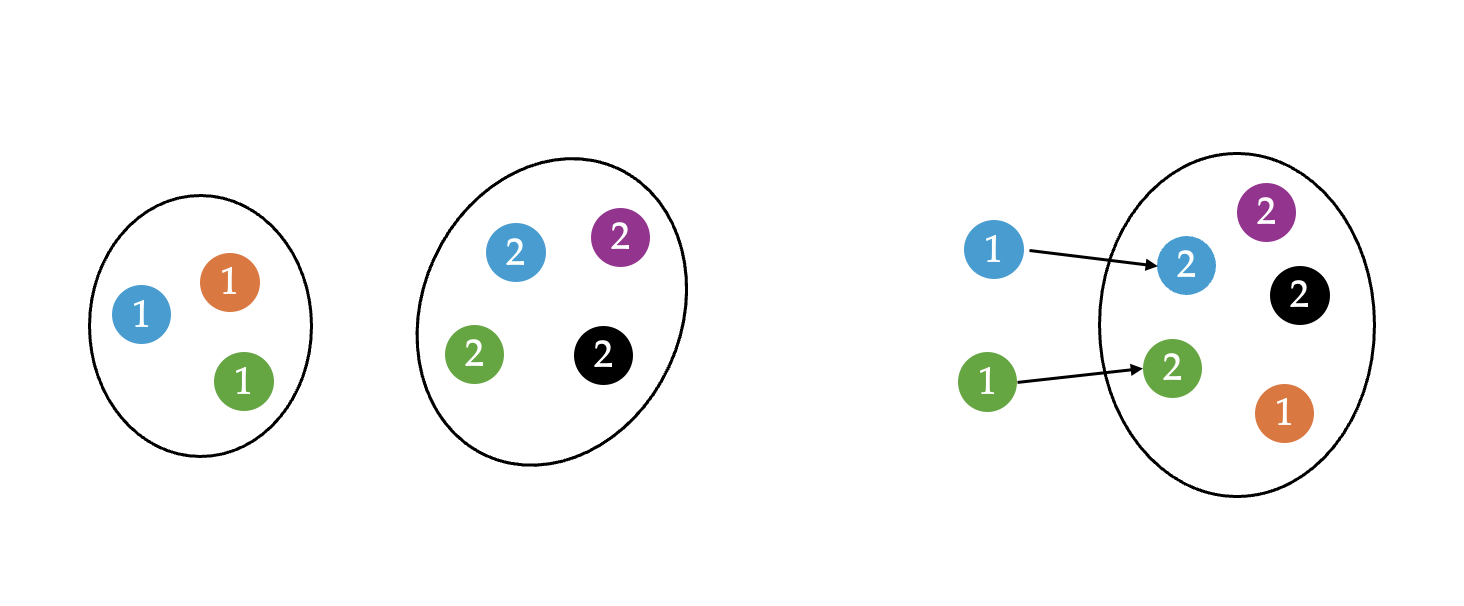}
    \caption{After we merge $I_1, I_2$ on the left, we get $I_3$ and a mapping from $\com(I_1, I_2)$ to $\com(I_2, I_1)$.}
    \label{fig1}
\end{figure}

Given the {\sc Merge} routine, the algorithm is very simple: while there exists two independent sets $I_1$ and $I_2$ of comparable size (within factor $2$), merge them
and replace $I_1$ and $I_2$ with $I_3$ returned by the merge. This may remove some elements from $J$, and in particular this is $\com(I_1,I_2)$, but all these elements
will have $\rep(e)$ pointing to their friends who are still in $J$. When the algorithm can't do this anymore, there must be at most $\ell = \ceil{\log k}$ independent sets remaining in $\calJ$.
These can be sequentially merged in any order to get one single independent set $J$, indeed a basis, in $\calJ$. To find the partition, consider the directed graph on $V$ where 
we add the edge $(e,\rep(e))$ for all $e\in V\setminus J$; note this forms a collection of directed in-trees rooted at vertices in $J$, and the connected components are precisely the parts that we 
desire. 
In what follows we show how to implement {\sc Merge} using existing results from coin-weighing and graph reconstruction, and then argue why the total number of $\RANK$ queries made by our algorithm is $O(n)$.

\subsection{Definitions}

We review or introduce several definitions for completeness.

\begin{definition} ($\ADDITIVE$ query)
An $\ADDITIVE$ query on an unweighted graph $G = (V, E)$: given $S \subseteq V$, obtain $|\{e \in E: e \in S \times S\}|$.
\end{definition}

\begin{definition} ($\SUM$ query)
A $\SUM$ query on a boolean vector $x \in \{0,1\}^N$: given $S \subseteq [N]$, obtain $\sum_{i\in S} x_i$.
\end{definition}

\begin{definition} ($\com$ of two sets)
Given two independent sets $I_1, I_2$. The set of {\em common nodes} $\com(I_1, I_2) := \{v_1 \in I_1: \exists v_2 \in I_2, P_i, \{v_1,v_2\} \subseteq P_i\}$
is the subset of nodes in $I_1$ which have a friend in $I_2$.
\end{definition}

\begin{definition} ($\rep(e)$ of a node)
We maintain a map $\rep$ with the property that a node $e$ and $\rep(e)$ (representative of $e$) are in the same partition in $\calP$. $\rep$ keeps track of the learned partition by mapping the learned node to its friend who is still in $J$. 
\end{definition}

\subsection{Merging Independent Sets.}
We begin by introducing some vector/graph reconstruction algorithms from the literature.

\begin{lemma}[\cite{Bshouty09}]
	\label{lem:bshouty}
	Let $x\in \{0,1\}^N$ be an unknown Boolean vector with $\SUM$-query access. If $x$ has $d$ ones, then there is a polynomial time, adaptive, deterministic algorithm to reconstruct $x$ which makes $O(d\log(N/d)/\log d)$ $\SUM$ queries.
\end{lemma}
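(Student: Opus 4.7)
The plan is to match the information-theoretic lower bound with an explicit construction.

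For the lower bound, there are $\binom{N}{d} = \Omega((N/d)^d)$ distinct $d$-sparse Boolean vectors of length $N$, while each $\SUM$ query returns an integer in $\{0,1,\ldots,d\}$ and thus conveys at most $\log(d+1)$ bits. A standard entropy calculation forces at least $\Omega(d\log(N/d)/\log d)$ queries, which matches the claimed bound.

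For the upper bound, I would construct a query matrix $M\in\{0,1\}^{m\times N}$ with $m=O(d\log(N/d)/\log d)$ that is \emph{$d$-separating}, meaning $Mx \neq Mx'$ (as integer vectors) for any two distinct $d$-sparse Boolean $x, x'$. Then taking the $j$-th query to be the indicator subset of row $j$ of $M$, the vector of answers $Mx$ determines $x$. Existence follows from the probabilistic method: sample entries i.i.d.\ Bernoulli$(1/2)$; for any fixed nonzero $y = x-x'$ with $|\supp(y)| \leq 2d$, each row satisfies $\Pr[M_j \cdot y = 0] = O(1/\sqrt{d})$ by an anticoncentration estimate (Berry--Esseen) for sums of independent $\pm 1$ variables; a union bound over the $\binom{N}{d}^2$ pairs then succeeds once $m = C d\log(N/d)/\log d$ for a suitable constant $C$. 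Derandomization proceeds by the method of conditional expectations, or by explicit algebraic designs based on Sidon-like $B_d$-sets or Reed--Solomon-style codes, giving the matrix in polynomial time.

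The main obstacle---and the reason the statement emphasizes \emph{adaptivity}---is efficient decoding of $x$ from $Mx$, which for a general separating matrix can be as hard as integer programming. The adaptive workaround layers the construction: use an initial batch of $O(d\log(N/d)/\log d)$ queries from a coarse separating design to localize the support to a union of small buckets, then spend a further constant factor of adaptive queries per bucket applying local dense coin-weighing in the style of Lindstr\"om--Cantor. A careful per-level accounting that charges $O(d)$ queries against $\log d$ bits of information revealed per round keeps the total within the $O(d\log(N/d)/\log d)$ budget, with the polynomial-time guarantee coming from the structured localization rather than brute-force inversion.
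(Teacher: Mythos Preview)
The paper does not prove this lemma: it is stated as a known result attributed to \cite{Bshouty09} and used as a black box in the proof of \Cref{lem:merge}. There is therefore no proof in the paper to compare your proposal against.

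On its own merits: your information-theoretic lower bound is standard and correct, and the probabilistic existence argument for a $d$-separating matrix is essentially the classical Lindstr\"om/Erd\H{o}s--R\'enyi calculation. Where the proposal goes thin is exactly where you flag it---the polynomial-time adaptive decoding. ``Localize to buckets and then apply dense coin-weighing per bucket'' is the right flavor, but ``charge $O(d)$ queries against $\log d$ bits per round'' is a slogan, not an argument; as written there is no recursion stated and no inequality verified. Bshouty's actual algorithm is a carefully analyzed divide-and-conquer: partition the coordinates into blocks, spend one $\SUM$ query per block to learn how the $d$ ones are distributed, and recurse into the nonempty blocks, with the query bound following from a recurrence rather than from inverting a separating matrix. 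If you want a self-contained proof, that recursive route is both simpler and more faithful to the cited source than the random-matrix-plus-decoding framework you sketch.
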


\begin{lemma}~[Paraphrasing Theorem 4 \& Section 4.3 \cite{GrebinskiK00}] \label{lem:GK}
	A bipartite graph $G = (V,W,E)$ with $|V| = |W| = m$ where $E$ forms a perfect matching can be learnt in 
		$O(m)$ $\ADDITIVE$ queries. 
\end{lemma}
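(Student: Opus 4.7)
The plan is to reduce the learning of a bipartite perfect matching to $\lceil \log m \rceil$ independent coin-weighing instances on $m$-dimensional Boolean vectors and then invoke \Cref{lem:bshouty} (or the classical Lindstr\"om--Cantor construction). Label $V = \{v_1, \ldots, v_m\}$ and $W = \{w_1, \ldots, w_m\}$, and let $f : V \to W$ denote the unknown perfect matching, so that learning $G$ is equivalent to learning the function $f$.

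The core observation is that in a bipartite graph, for any $T \subseteq V \cup W$, the query $\ADDITIVE(T)$ equals the number of matching edges with one endpoint in $T \cap V$ and the other in $T \cap W$ (there are no other edges inside $T$). For each $i \in \{1, \ldots, \lceil \log m \rceil\}$ let $W^{(i)} := \{w_j \in W : \text{bit } i \text{ of } j \text{ equals } 1\}$, and define the Boolean vector $x^{(i)} \in \{0,1\}^m$ by $x^{(i)}_v := \mathbf{1}[f(v) \in W^{(i)}]$. The $\lceil \log m \rceil$ vectors $x^{(1)}, \ldots, x^{(\lceil \log m \rceil)}$ jointly encode, for every $v$, the binary expansion of the index of $f(v)$, and thus fully determine $f$.

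Fixing a bit position $i$, for any $S \subseteq V$ the query $\ADDITIVE(S \cup W^{(i)})$ returns exactly $\sum_{v \in S} x^{(i)}_v$. Hence additive queries on $G$, with the $W$-side of the query pinned to $W^{(i)}$, simulate $\SUM$ queries on $x^{(i)}$ at no overhead. A Lindstr\"om--Cantor coin-weighing matrix recovers any vector in $\{0,1\}^m$ non-adaptively using $O(m/\log m)$ sum queries; alternatively, one can split $W^{(i)}$ into two roughly equal halves, apply \Cref{lem:bshouty} to whichever half turns out to have fewer matches (the complement being determined since $|W^{(i)}|$ is known a priori), again giving $O(m/\log m)$ queries per bit position. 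Summing over the $\lceil \log m \rceil$ slices yields a total of $O(m)$ additive queries, and $f$ is then read off coordinate by coordinate.

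The main obstacle is the careful invocation of coin-weighing: the vectors $x^{(i)}$ are typically dense ($d \approx m/2$), so the factor $\log(N/d)/\log d$ in \Cref{lem:bshouty} shrinks to a small constant rather than offering any sparsity gain. Handling this cleanly---either by citing the classical non-adaptive construction for arbitrary Boolean vectors of length $m$, or by reducing to the sparse regime via the half-splitting trick above---is the one place where the reduction must be done with care. Once the correct coin-weighing subroutine is plugged in, the query accounting is immediate and deterministic, and the bound $O(m)$ follows.
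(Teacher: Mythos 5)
The paper offers no proof of \Cref{lem:GK}; it is a black-box citation of Theorem~4 and Section~4.3 of \cite{GrebinskiK00}, so there is no internal proof to compare against. Your bit-slicing reduction---decomposing $f:V\to W$ into $\lceil\log m\rceil$ Boolean vectors $x^{(i)}$ and simulating $\SUM$ queries on $x^{(i)}$ by $\ADDITIVE(S\cup W^{(i)})$---is correct and is in fact the standard route to this result, so your proposal is a legitimate reconstruction of what the citation delivers. The simulation identity $\ADDITIVE(S\cup W^{(i)})=\sum_{v\in S}x^{(i)}_v$ holds because in a bipartite matching the only edges inside $S\cup W^{(i)}$ are the matching edges from $S$ into $W^{(i)}$, and the $\lceil\log m\rceil\cdot O(m/\log m)=O(m)$ accounting is right.

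One correction to your closing paragraph: the worry that \Cref{lem:bshouty} degrades for dense $x^{(i)}$ (with $d\approx m/2$) is unfounded. Plugging $N=m$, $d=m/2$ into $O\bigl(d\log(N/d)/\log d\bigr)$ gives $O\bigl((m/2)\cdot\log 2/\log(m/2)\bigr)=O(m/\log m)$, which is exactly what you need; the factor $\log(N/d)/\log d$ shrinks to $\Theta(1/\log m)$, not to a constant. Relatedly, the ``half-splitting trick'' you propose does not actually buy sparsity---half of a set with $\approx m/2$ ones still has $\approx m/4$ ones, which is not sparse---so that fix, as stated, would not work. But it is also unnecessary: either \Cref{lem:bshouty} applied directly, or the classical non-adaptive Lindstr\"om--Cantor matrix for arbitrary Boolean vectors of length $m$ (as you also suggest), gives $O(m/\log m)$ queries per bit slice, and the $O(m)$ total follows.
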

\noindent
\noindent
Now we are ready to describe {\sc Merge} whose properties are encapsulated in the following lemma.
\begin{lemma} \label{lem:merge}
	Let $I_1, I_2$ be two independent sets and let $k_1 = |I_1|$ and $k_2 = |I_2|$. Suppose $d = |\com(I_1, I_2)| = |\com(I_2, I_1)|$. 
	The procedure {\sc Merge} is an adaptive deterministic polynomial time algorithm which returns $I_3 = I_1 + I_2 - \com(I_1,I_2)$ 
	and $\rep(e) \in \com(I_2,I_1)$ for all $e\in \com(I_1,I_2)$. The procedure makes $O(\frac{d\log (
	\max(k_1,k_2)/d)}{\log d})$ $\RANK$ queries. 	
\end{lemma}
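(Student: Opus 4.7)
The plan is to implement \textsc{Merge} in two phases. First I would identify $\com(I_1,I_2) \subseteq I_1$ and $\com(I_2,I_1) \subseteq I_2$ as boolean vectors; second, restricted to these two sets (each of size $d$), the friendship relation forms a perfect bipartite matching, which I would reconstruct via~\Cref{lem:GK}. Each phase will simulate the required external query (a $\SUM$ query in phase one, an $\ADDITIVE$ query in phase two) by a single $\RANK$ query on an appropriate subset of $I_1 \cup I_2$.

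The driving observation is that since both $I_1$ and $I_2$ are independent, any part $P_i$ meets $S_1 \cup S_2$ in at most two elements (one from $S_1 \subseteq I_1$ and one from $S_2 \subseteq I_2$), and contributes $2$ to $|S_1|+|S_2|$ but only $1$ to $\RANK(S_1 \cup S_2)$ precisely when it contains a friend pair crossing $S_1$ and $S_2$. Hence
\[ \RANK(S_1 \cup S_2) \;=\; |S_1|+|S_2| \;-\; \bigl|\{(v_1,v_2)\in S_1\times S_2 : v_1,v_2 \text{ are friends}\}\bigr|, \]
so a single $\RANK$ query returns the number of friend pairs between $S_1$ and $S_2$. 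I expect \emph{this identity is the main conceptual step}: it is exactly the point where $\RANK$ queries give more than independence queries, and it lets us plug directly into the coin-weighing and matching-reconstruction toolkit.

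For phase one, let $x \in \{0,1\}^{I_1}$ be the indicator vector of $\com(I_1,I_2)$, which has exactly $d$ ones. Setting $S_2 = I_2$ in the identity above,
\[ \littlesum_{v \in S_1} x_v \;=\; |S_1| + |I_2| - \RANK(S_1 \cup I_2), \]
so one $\RANK$ query implements one $\SUM$ query on $x$. Applying~\Cref{lem:bshouty} with $N=k_1$ recovers $\com(I_1,I_2)$ using $O(d\log(k_1/d)/\log d)$ queries, and a symmetric argument recovers $\com(I_2,I_1)$ in $O(d\log(k_2/d)/\log d)$ queries.

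For phase two, define the bipartite graph $H$ on $\com(I_1,I_2)\cup\com(I_2,I_1)$ whose edges are friend pairs; by construction $H$ is a perfect matching of size $d$. For any $T \subseteq \com(I_1,I_2)\cup\com(I_2,I_1)$, the same identity gives
\[ \ADDITIVE_H(T) \;=\; |T \cap \com(I_1,I_2)| + |T \cap \com(I_2,I_1)| - \RANK(T), \]
so one $\RANK$ query implements one $\ADDITIVE$ query on $H$. Invoking~\Cref{lem:GK} with $m=d$ then reconstructs $H$, and thus $\rep(e) \in \com(I_2,I_1)$ for every $e \in \com(I_1,I_2)$, in $O(d)$ further $\RANK$ queries. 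We output $I_3 := I_1 + I_2 - \com(I_1,I_2)$ directly. Summing the three contributions gives the stated bound $O(d\log(\max(k_1,k_2)/d)/\log d)$, with degenerate cases ($d=0$ is trivial; $d=1$ is handled by binary search inside~\Cref{lem:bshouty} and absorbed into the bound under the standard convention) dispatched by the invoked lemmas; polynomial-time execution and adaptivity are likewise inherited from~\Cref{lem:bshouty,lem:GK}.
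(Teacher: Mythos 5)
Your proof is correct and follows essentially the same route as the paper's: simulate a $\SUM$ query on the indicator vector of $\com(I_1,I_2)$ via $\RANK(S\cup I_2)$ and invoke \Cref{lem:bshouty}, then simulate an $\ADDITIVE$ query on the friend-matching via $|T|-\RANK(T)$ and invoke \Cref{lem:GK}. Your unifying identity $\RANK(S_1\cup S_2)=|S_1|+|S_2|-\#\{\text{friend pairs across }S_1,S_2\}$ is a clean way to present both simulations at once, but the paper states the same two specializations directly; there is no substantive difference in the argument.
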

\begin{proof}
Given $I_1$ and $I_2$, define the Boolean vector $\bx := \bx_{(I_1,I_2)} \in \{0,1\}^{k_1}$ where $\bx_e = 1$ if and only if $e \in \com(I_1,I_2)$.
We note that a $\SUM$ query can be simulated on $\bx$ using a single $\RANK$ query. This is due to the observation that for all $S\subseteq I_1$, 
$\sum_{e\in S} \bx_e = |S| + |I_2| - \RANK(S \cup I_2)$.
This is because the RHS precisely counts the number of parts of $S$ that are already present in $I_2$, or $\com(I_1,I_2) \cap S$. 
Therefore, 
we can apply~\Cref{lem:bshouty} to learn $\com(I_1,I_2)$ in $O(\frac{d\log (k_1/d)}{\log d})$ many $\RANK$ queries.
Similarly, we can get $\com(I_2,I_1)$ in $O(\frac{d\log (k_2/d)}{\log d})$ queries. 
Note that the above doesn't give us the friends for $e\in \com(I_1,I_2)$ in $\com(I_2,I_1)$. This pairing can be found as follows.
For simplicity, let's use $X := \com(I_1,I_2)$ and $Y := \com(I_2,I_1)$. Consider the bipartite graph $G = (X,Y,E)$ where
$e\in X$ has an edge to $f\in Y$ if and only if $f$ is $e$'s friend. So, $G$ is a perfect matching whose edges are yet unknown.
We can now use~\Cref{lem:GK} to find them.  To see why this can be done, note that we can simulate the $\ADDITIVE$ query because for any $S \subseteq X \cup Y$, simply 
because
$\ADDITIVE(S) = |S| - \RANK(S)$
This is because any edge $(e,f)$ with both endpoints in $S$ are precisely the pairs which are counted once in $\RANK(S)$ but twice in $|S|$.
Thus, finding this matching takes $O(d)$ $\RANK$ queries. \qedhere

\begin{algorithm}[ht!]
	\caption{Merging Independent Sets}\label{alg:merge}
	%
	\begin{varwidth}{\dimexpr\linewidth-2\fboxsep-2\fboxrule\relax}
		\begin{algorithmic}[1]
			\Procedure{Merge}{$I_1,I_2$}:
			\LineComment{Input: Two independent sets}
			\LineComment{Output: $\com(I_1,I_2)$ and $\rep(e) \in \com(I_2,I_1)$ for $e\in \com(I_1,I_2)$.}
				\State Learn $\com(I_1,I_2)$ and $\com(I_2,I_1)$ as described above using $O(d\log(\max(k_1,k_2))/\log d)$ $\RANK$ queries.
				\State Learn $\rep(e) \in \com(I_2,I_1)$ for $e\in \com(I_1,I_2)$ as described above in $O(d)$ $\RANK$ queries.
				\State $I_3 \eq I_1 \cup I_2 - \com(I_1,I_2)$. \Comment{Note that $I_3$ is independent and $\rep(e) \in I_3$ for all $e\in \com(I_1,I_2)$.}
				\State \Return $(I_3,\rep)$
			\EndProcedure
		\end{algorithmic}
	\end{varwidth}%
\end{algorithm}
%
\end{proof}

\subsection{The algorithm and analysis.} We give the pseudocode of the algorithm 	in~\Cref{alg:main}.
\begin{algorithm}[ht!]
	\caption{Find Partition}\label{alg:main}
	%
	\begin{varwidth}{\dimexpr\linewidth-2\fboxsep-2\fboxrule\relax}
		\begin{algorithmic}[1]
			\Procedure{FindPartition}{$V,\RANK$}:
			\LineComment{Input: $n$ elements with $\RANK$ query access to hidden partition $\calP$.}
			\LineComment{Output: the partition.}			
			\State Create $\calJ \eq \{\{e_1\}, \{e_2\}, \ldots, \{e_n\}\}$; $J\eq V$
			\State Create graph $G = (V,F)$ with $F\eq \emptyset$. \Comment{this will be used to find the parts}
			\While{$\exists I_1,I_2 \in \calJ~:~ |I_1|/|I_2| \in [1/2,2]$}:
				\State $(I_3,\rep(e)) \eq ${\sc Merge}$(I_1,I_2)$. \label{alg:q1}
				\State For all $e\in \com(I_1,I_2)$, add $(e, \rep(e))$ to the edge-set $F$.	
				\State $\calJ \eq \calJ - \{I_1,I_2\} + I_3$; $J \eq J\setminus \com(I_1,I_2)$.
			\EndWhile
			\LineComment{At this point there can be at most $\ceil{\log_2 n}$ elements in $\calJ$}
			\LineComment{Merge all these sets in any order to get a single set. We provide one below.}
			\State Let $\calJ = \{I_1, I_2, \ldots, I_\ell\}$ with $\ell \leq \ceil{\log_2 n}$; $I \eq I_1$; $\calJ \eq \calJ \setminus I_1$
			\For{$2\leq t\leq \ell$}:
				\State $(I_3, \rep(e)) \eq $ {\sc Merge}($I_t,I$). \label{alg:q2}
				\State For all $e\in \com(I_t,I)$, add $(e, \rep(e))$ to the edge-set $F$.	
				\State $\calJ \eq \calJ - \{I_t,I\} + I_3$; $I\eq I_3$.
			\EndFor
			\LineComment{At this point $\calJ$ has a single independent set $I$. Every element in $e \in V\setminus I$ has a single representative $\rep(e)$. So $G$ is a collection of directed in-trees with roots in $I$}
			\State \Return Connected components of $G$.
			\EndProcedure
		\end{algorithmic}
	\end{varwidth}%
\end{algorithm}

\noindent
We now claim that the algorithm makes $O(n)$ queries. All the queries to $\RANK$ occur in the calls to {\sc Merge} in~\Cref{alg:q1} or~\Cref{alg:q2}. 
Let's take care of the second ones first since it's straightforward.
\begin{claim}
	The total number of $\RANK$ queries made in {\sc Merge} calls in~\Cref{alg:q2} over  the for-loop is $O(n)$.
\end{claim}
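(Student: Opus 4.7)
The plan is to combine two facts: (1) at the start of the for-loop, $\calJ$ contains at most $\ell \leq \ceil{\log_2 n}$ independent sets, so the loop runs at most $\log_2 n$ times; and (2) $\sum_t d_t \leq n$, where $d_t := |\com(I_t, I)|$ at iteration $t$, since the sets $\com(I_t, I)$ of elements removed across iterations are disjoint subsets of the at-most-$n$-element set $J$ present at the start of the for-loop.

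By \Cref{lem:merge}, the $t$-th call to {\sc Merge} costs $O\!\left(\frac{d_t \log(\max(|I_t|,|I|)/d_t)}{\log d_t}+d_t\right)$ rank queries, and $\max(|I_t|,|I|)\leq n$ throughout. The additive $d_t$ pieces sum to $O(n)$ by fact (2), so what remains is to control $\sum_t f(d_t)$ with $f(d):=d\log(n/d)/\log d$. I would split the iterations into two regimes separated by $d_t = \sqrt n$.

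For iterations with $d_t \geq \sqrt n$: the inequality $n/d_t\leq d_t$ gives $f(d_t)\leq d_t$ directly, and summing over such iterations yields $O(\sum_t d_t)=O(n)$ by fact (2). For iterations with $2 \leq d_t \leq \sqrt n$: the crude bound $f(d_t)\leq d_t \log n \leq \sqrt n\,\log n$ combined with at most $\log_2 n$ iterations by fact (1) gives $O(\sqrt n \log^2 n) = o(n)$.

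The degenerate cases $d_t\in\{0,1\}$ need separate handling since the bound in \Cref{lem:merge} is not literally defined there; each is resolved with $O(\log n)$ rank queries (one query suffices when $d_t=0$; a binary search locates the unique shared element when $d_t=1$), contributing $O(\log^2 n)=O(n)$ overall. The main conceptual step is the regime split at $\sqrt n$: once one observes that $f$ is dominated by $d_t$ when $d_t$ is large and by a slowly growing function of $n$ when $d_t$ is small, facts (1) and (2) close the bookkeeping. I expect no serious obstacle beyond being careful with the degenerate small-$d_t$ cases where Bshouty's bound reads as division by zero.
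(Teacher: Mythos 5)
Your proof is correct and follows essentially the same approach as the paper: both bound $\sum_t d_t \leq n$ and the number of iterations by $O(\log n)$, then split iterations into a large-$d_t$ regime (where $\log d_t$ cancels the log in the numerator) and a small-$d_t$ regime (where the iteration count bounds the sum). Your threshold of $\sqrt n$ is an equivalent substitute for the paper's $n/\log^2 n$; your explicit handling of the degenerate $d_t \in \{0,1\}$ cases is a point of care the paper elides, and is fine.
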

\begin{proof}
	There are $\ell = O(\log n)$ merges made; that is the only fact we will use. By~\Cref{lem:merge}, the $t$th 
	{\sc Merge} would make at most $O(d_t\log n/\log d_t)$ many $\RANK$ queries, where $d_t$ is the size of $\com(I_t,I)$ at that time.
	All we care for is that $\sum_{t=1}^\ell d_t \leq n$. Now we observe (an explicit reference is Claim 3 of~\cite{ChakrabartyL23}) 
	that if $\ell \leq C\log n$, then $\sum_{t=1}^\ell \frac{d_t}{\log d_t} = O(n/\log n)$. To see this, note that the contribution to this sum of all the 
	$d_t$'s which are $\leq \frac{n}{C\log^2 n}$ is at most $\frac{n\ell}{C\log^2 n} < n/\log n$. 
	All the other $d_t$'s have $\log d_t = \Omega(\log n)$ and so their contribution is $O(\sum_t d_t/\log n) = O(n/\log n)$.
	Altogether, we see that $O(\sum_{t = 1}^\ell d_t\log n/\log d_t) = O(n)$.
\end{proof}

\begin{claim}
	The total number of $\RANK$ queries made in {\sc Merge} calls in~\Cref{alg:q1} over  the while-loop is $O(n)$.
\end{claim}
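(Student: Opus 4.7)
The plan is to group the while-loop merges by a dyadic size-level of their larger participant, then split within each level by the overlap size $d$. For each merge $v$ called at \Cref{alg:q1}, let $k_v = \max(|I_1|, |I_2|)$ and $d_v = |\com(I_1, I_2)|$; say $v$ has level $\ell$ if $k_v \in [2^\ell, 2^{\ell+1})$. By the ratio condition $|I_1|/|I_2| \in [1/2, 2]$, both participants have size at least $2^{\ell-1}$. Write $m_\ell$ for the number of level-$\ell$ merges. The cost to bound, by \Cref{lem:merge}, is $\sum_v O(d_v \log(k_v/d_v)/\log d_v)$, and I already know $\sum_v d_v \le n$ since each element is removed from $J$ at most once.

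The first step is to prove the per-level bound $m_\ell \le 2n/2^\ell$ via a potential argument. Fix $\ell$ and set $\phi(\calJ) = \sum_{I \in \calJ} \min(|I|, 2^\ell)$, which starts at $n$, stays nonnegative, and is easily checked to be non-increasing under every merge. A short case analysis shows that every merge at level $\ge \ell$ decreases $\phi$ by at least $2^{\ell-1}$: if both children have size $\ge 2^\ell$ the contribution to $\phi$ drops from $2 \cdot 2^\ell$ to $2^\ell$, while if only the larger side is $\ge 2^\ell$, then the drop equals the smaller participant's size, which by the ratio condition is at least $2^{\ell-1}$. Summing over time, the total number of merges at all levels $\ge \ell$ is at most $n/2^{\ell-1} = 2n/2^\ell$.

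Given this, I split each merge by comparing $d_v$ to $\sqrt{k_v}$. For heavy merges ($d_v \ge \sqrt{k_v}$) we have $\log d_v \ge \tfrac12 \log k_v$, so the per-merge cost is $O(d_v \log(k_v/d_v)/\log d_v) = O(d_v)$; summing gives $O(\sum_v d_v) = O(n)$. For light merges ($d_v < \sqrt{k_v}$) at level $\ell$, the per-merge cost is at most $O(2^{\ell/2} \cdot \ell)$, with the boundary cases $d_v \in \{0,1\}$ handled separately as $O(1)$ and $O(\ell)$ respectively. Summing over the $O(n/2^\ell)$ light merges at level $\ell$ yields $O(n \ell / 2^{\ell/2})$, and since $\sum_{\ell \ge 0} \ell/2^{\ell/2}$ is a convergent geometric series, the total contribution from light merges is $O(n)$. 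Adding the two cases gives $O(n)$ queries overall.

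The main technical obstacle I expect is calibrating the potential in Step~1 carefully, in particular handling the boundary between ``smaller side $< 2^\ell$'' and ``smaller side $\ge 2^\ell$'' cleanly so that the drop in $\phi$ is uniformly at least $2^{\ell-1}$ for every level-$\ge \ell$ merge. Once that lemma is in place, the case split in Step~2 and the geometric summation are routine.
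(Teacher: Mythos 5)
Your proof is correct, and it takes a genuinely different route for the key counting step. Both you and the paper split merges into ``heavy/thick'' ($d \ge \sqrt{k}$, cost $O(d)$ each, charged against $\sum d \le n$) and ``light/thin'' ($d < \sqrt{k}$), and both bucket the light merges by a dyadic size class; but where the paper bounds the number of thin class-$t$ merges by arguing that the same element can participate in at most two such merges (a claim that needs a little boundary care, since it rests on $|I_3| \ge 2^{t+1} - 2^{t/2}$ eventually escaping class $t$ after two steps), you instead introduce the level-$\ell$ potential $\phi_\ell(\calJ) = \sum_{I\in\calJ}\min(|I|, 2^\ell)$ and show every merge whose larger side is at least $2^\ell$ decreases $\phi_\ell$ by at least $2^{\ell-1}$. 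This gives the bound $m_\ell \le 2n/2^\ell$ on \emph{all} merges at level $\ge \ell$, not just the thin ones, and avoids the two-participations case analysis entirely. Your potential argument is cleaner and more robust; the one thing you trade away is sharpness in the per-merge cost bound for light merges --- you use the crude $\log d_v \ge 1$ to get $O(2^{\ell/2}\ell)$ per merge where the paper uses the monotonicity of $d/\log d$ to get $O(2^{t/2})$ --- but since $\sum_\ell \ell/2^{\ell/2}$ converges this costs you nothing asymptotically. Two small points worth keeping in the writeup: the verification that $\phi_\ell$ is globally non-increasing (which follows from $|I_3| \le |I_1|+|I_2|$ and subadditivity of $x\mapsto\min(x,2^\ell)$), and the verification that $|I_3| \ge \max(|I_1|,|I_2|) \ge 2^\ell$ for a level-$\ge\ell$ merge (since $\com(I_1,I_2)\subseteq I_1$), which is what makes the post-merge contribution exactly $2^\ell$ in both cases of your drop analysis.
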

\begin{proof}
To argue about the {\sc Merge}'s in~\Cref{alg:q1}, we need to partition these into two classes. Note that all such merges take two independent 
sets $I_1$ and $I_2$ which are of similar size $k_1$ and $k_2$ respectively; without loss of generality, let $k_1 \leq k_2 \leq 2k_1$.
Let $d := |\com(I_1,I_2)|$. We call a merge {\em thick} if $d \geq \sqrt{k_1}$
and {\em thin} otherwise. We argue about the thick and thin merges differently. 
\begin{asparaitem}
	\item Using~\Cref{lem:merge}, we see that a thick merge costs $O(d\log(\max(k_1,k_2))/\log d) = O(d)$ $\RANK$ queries; we have used here
	that $k_2 \leq 2k_1$ and $d \geq \sqrt{k_1}$. Thus, we can {\em charge} these $\RANK$ queries to the $d$-elements which {\em leave} $J$.
	Thus, the total number of $\RANK$ queries made across all thick merges is $O(n)$.
	
	\item To argue about thin merges, we make a further definition. Let us say that an independent set $I$ is in class $t$ if $|I| \in [2^t, 2^{t+1})$, for $0\leq t\leq \floor{\log_2 n}$.
	Fix such a $t$.
	A thin merge $(I_1,I_2)$ is called a class $t$ thin-merge if the smaller cardinality set is in class $t$. 
	An element $e\in V$ participates in a class $t$ thin-merge $(I_1,I_2)$ if it is present in the smaller set.
	Observe that for a thin class $t$ merge, the resulting independent set $I_3$ almost doubles in size; in particular, $|I_3| = |I_1| + |I_2| - |\com(I_1,I_2)| \geq 2^{t+1} - 2^{t/2}$.
	Using this one can argue that the same element cannot participate in more than {\em two} class $t$-thin merges; after two merges the set ceases to be class $t$.
	In particular, this means the number of thin class $t$ merges is at most $2\cdot n/2^t$, and each such merge, by~\Cref{lem:merge}, can be done with $O(d\log(2^{t+1})/\log d)$
	many $\RANK$ queries where $d = |\com(I_1,I_2)| < 2^{t/2}$. Since $d/\log d$ is an increasing function of $d$, we conclude that any class $t$ thin-merge takes at most 
	$O(2^{t/2}\log(2^{t+1})/\log(2^{t/2})) = O(2^{t/2})$ many $\RANK$ queries.
	Therefore, the total number of $\RANK$ queries made within thin merges is at most 
		$\sum_{t=0}^{\log_2 n} \frac{2n}{2^t}\cdot O(2^{t/2}) = O(n)$ \qedhere
\end{asparaitem}
\end{proof}
The above two claims imply the proof of~\Cref{thm:partition}.

\section{General Partition Matroids}\label{sec:genpart}

We recall the problem. As before, the universe is $V$ and there is a hidden partition $\calP = (P_1, \ldots, P_k)$.
Furthermore, there are integers $r_1, \ldots, r_k$ where $0 < r_i < |P_i|$.\footnote{Suppose we allow $r_i \ge |P_i|$. Let $M := \{i | r_i \ge |P_i|\}$. Now $\RANK(S) = \sum_{i \in M}|S\cap P_i| + \sum_{i \notin M} \min(|S\cap P_i|, r_i)$. This means we get no information for partitions with index in $M$. To see this, we pick $x_1 \in P_{i_1}, x_2 \in P_{i_2}$ with $i_1, i_2 \in M$ and $i_1 \neq i_2$. If we swap $x_1$ with $x_2$ in every set we give to the $\RANK$-oracle, the answer it returns is the same. Thus no $\RANK$ query algorithm can tell $x_1, x_2$ apart.} This defines a partition matroid where a set $I$ is independent if and only if
$|I\cap P_i| \leq r_i$ for $1\leq i\leq k$. The $\RANK$-oracle for this matroid is the following
$
\RANK(S) = \sum_{i=1}^k \min(|S\cap P_i|, r_i)
$.
\noindent
We will prove the following theorem in this section. 
\matroid*
\noindent
Our proof technique will be a {\em reduction} to the simple partition matroid setting of~\Cref{sec:simpart}. Before we get there, let's first begin with a simple 
well-known observation.
\begin{lemma} \label{lem:basis}
	There is an $O(n)$ $\RANK$ query algorithm that finds a basis $B $ of a partition matroid.
\end{lemma}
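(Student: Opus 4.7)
My plan is to invoke the classical matroid greedy algorithm with uniform weights, which requires just one $\RANK$ query per element. Initialize $B \leftarrow \emptyset$, fix an arbitrary ordering $v_1, \ldots, v_n$ of $V$, and maintain the invariant that $B$ is independent, so that $\RANK(B) = |B|$ is known without a query. At step $j$, query $\RANK(B \cup \{v_j\})$ and add $v_j$ to $B$ iff the answer equals $|B|+1$; otherwise discard $v_j$. This uses exactly $n$ rank queries, which is $O(n)$.

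For correctness, I would argue directly rather than cite the greedy theorem. The set $B$ is independent by construction, so its final size is at most the matroid rank $r$. Conversely, whenever $v_j$ was rejected, letting $B_j$ denote the value of $B$ at that moment, we had $\RANK(B_j \cup \{v_j\}) = |B_j| = \RANK(B_j)$, which means $v_j$ lies in the matroid closure of $B_j$. Since $B_j \subseteq B$, the element $v_j$ also lies in the closure of the final $B$. Every element of $V$ is therefore in the closure of $B$, so $\RANK(B) = \RANK(V) = r$, and since $B$ is independent it is a basis.

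There is no real obstacle here: the only subtlety is making sure that we never need to query $\RANK(B)$ separately, which is handled by the independence invariant that forces $\RANK(B) = |B|$. This is why the query count is $n$ rather than $2n$. The same template will be reused in the rest of the section, where after extracting a basis with this lemma we still need to identify the parts and their capacities, and that is where the additional $O(k \log r)$ queries of \Cref{thm:matroid} will come in.
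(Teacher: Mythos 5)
Your proposal is correct and matches the paper's proof essentially verbatim: both run the standard greedy algorithm (adding $v_j$ to $B$ exactly when $\RANK(B\cup\{v_j\})$ exceeds $\RANK(B)$), and both observe that since $B$ is maintained independent one query per element suffices. The only cosmetic difference is that the paper notes the loop can be flattened into a single non-adaptive round of $n$ queries, whereas you phrase the $n$-query bound via the invariant $\RANK(B)=|B|$; these are equivalent observations.
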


\begin{proof}
	This is standard and we give it below for completeness.
\begin{algorithm}[ht!]
		\caption{Finding a Basis Using Rank Queries}\label{alg:basis}
		%
			\begin{varwidth}{\dimexpr\linewidth-2\fboxsep-2\fboxrule\relax}
					\begin{algorithmic}[1]
							\Procedure{FindBasis}{$V,\RANK$}:
							\LineComment{Input: $n$ elements in $V$ with $\RANK$ query access}
							\LineComment{Output: A basis of $V$.}
							 	\State 	$B  \gets \{\}$. 
								\For{$v \in V$}:
									\If{$\RANK(B  + \{v\}) = \RANK(B ) + 1$}:
										\State $B  \gets B  + \{v\}.$
									\EndIf
								\EndFor
							\State \Return $B $. 					
							\EndProcedure
						\end{algorithmic}
				\end{varwidth}%
\end{algorithm}
Note that although described as a ``for-loop'', the above algorithm can be implemented in a single round of $n$ many $\RANK$ queries.
\end{proof}

\noindent
To obtain our reduction, what we need apart from this basis $B $ are two sets of {\em representatives}. A subset $T\subseteq V$ is a set of representative
if $|T\cap P_i| = 1$ for each $1\leq i\leq k$. The reduction will need {\em two} representative sets: $T_1 \subseteq B $ and $T_2 \cap B  = \emptyset$,
and the subroutine {\sc FindRepresentatives}($B$) will find this. Furthermore, it will also return a map $\phi:T_1 \to T_2$ where for each $e\in T_1$, $\phi(e)$ belongs
to the same part as $e$.
The algorithm does so in $O(n + k\log r)$ queries; in fact, only {\em independence oracle} queries suffice. 
This is slightly non-trivial and we described this in~\Cref{sec:findrep}. Let us now show how these representatives imply an $O(n)$-query algorithm to learn the partition $\calP$ and the $r_i$'s.

\begin{algorithm}[ht!]
	\caption{Using Representatives to Learn Partition}\label{alg:learnpartmatwreps}
	%
	\begin{varwidth}{\dimexpr\linewidth-2\fboxsep-2\fboxrule\relax}
		\begin{algorithmic}[1]
			\Procedure{LearnMatroidwithReps}{$V,\RANK, B, T_1, T_2,\phi: T_1\to T_2$}:
			\LineComment{Input: $n$ elements in $V$ with $\RANK$ query; basis $B$, set of representatives $T_1\subseteq B$, $T_2\cap B=\emptyset$, $\phi(t)$ is a friend of $t$.}
			\LineComment{Output: the partition $\calP$}
				\State For any subset $S\subseteq B$, define $\RANK_1(S) := \RANK(B - S + T_2) - \RANK(B-S)$.
				\State $\calP_1 \eq $ {\sc FindPartition}$(B, \RANK_1)$	\Comment{Takes $O(|B|)$ queries.} \label{alg:line:p1}
				\State For each $1\leq i\leq k$, $r_i \eq |B\cap P^{(1)}_i|$ where $\calP_1 = (P^{(1)}_1, \ldots, P^{(1)}_k)$.
				\State For any subset $S\subseteq V\setminus B$, define $\RANK_2(S) := \RANK(B + S - T_1) - \RANK(B-T_1)$.
				\State $\calP_2 \eq $ {\sc FindPartition}$(V\setminus B, \RANK_2)$	\Comment{Takes $O(|V\setminus B|)$ queries.} \label{alg:line:p2}
				\State Use $\phi$ to merge $\calP_1$ and $\calP_2$ into $\calP$: for $t\in T_1$, merge the part in $\calP_1$ containing $t_1$ with the part in $\calP_2$ containing $\phi(t)$.
				\State \Return $(\calP, \{r_i\}_{i=1}^k)$
			\EndProcedure
		\end{algorithmic}
	\end{varwidth}%
\end{algorithm}
\begin{claim}\label{clm:lpwr}
\Cref{alg:learnpartmatwreps} returns the correct partition $\calP$ and $r_i$'s making $O(n)$ many $\RANK$ queries.
\end{claim}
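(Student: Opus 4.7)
The plan is to verify that the two synthesized rank functions $\RANK_1$ and $\RANK_2$ exactly simulate simple-partition-matroid rank queries on the universes $B$ and $V \setminus B$ respectively, and then to invoke correctness of {\sc FindPartition} from \Cref{sec:simpart} together with the gluing provided by $\phi$. The boundedness of queries falls out almost immediately once this reduction is in place.

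First I would analyze $\RANK_1$. Since $B$ is a basis, $|B \cap P_i| = r_i$, so for any $S \subseteq B$ the set $B - S$ contains only $|B| - |S|$ elements and no part is oversaturated, giving $\RANK(B-S) = |B|-|S|$. Adding $T_2$ (which has exactly one element per part and is disjoint from $B$) contributes to the $i$-th term of $\RANK$ only if $(B-S)\cap P_i$ was strictly below $r_i$, i.e.\ only if $S$ hits $P_i$; a short case split gives
\[
\RANK(B - S + T_2) \;=\; \RANK(B-S) + \sum_{i=1}^k \min(|S\cap P_i|, 1).
\]
Hence $\RANK_1(S) = \sum_i \min(|S \cap P_i|,1)$, which is precisely the simple-partition-matroid rank on $B$ induced by $\calP|_B$. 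Because every $r_i \geq 1$, the restriction $\calP|_B$ really has $k$ parts, so \Cref{thm:partition} applied in Line~\ref{alg:line:p1} returns $\calP|_B$ correctly; reading off $r_i = |B \cap P^{(1)}_i|$ then recovers the ranks.

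Next I would do the symmetric computation for $\RANK_2$. Using $|T_1 \cap P_i|=1$ and $T_1 \subseteq B$, we get $(B - T_1) \cap P_i$ has $r_i-1$ elements, so $\RANK(B - T_1) = r - k$. For $S \subseteq V\setminus B$, the set $(B - T_1 + S) \cap P_i$ has $r_i - 1 + |S \cap P_i|$ elements, which is capped at $r_i$, contributing $r_i$ when $S$ hits $P_i$ and $r_i-1$ otherwise. Subtracting yields $\RANK_2(S) = \sum_i \min(|S \cap P_i|,1)$. The hypothesis $r_i < |P_i|$ guarantees each $P_i \setminus B \ne \emptyset$, so $\calP|_{V\setminus B}$ also has $k$ parts and \Cref{thm:partition} applied in Line~\ref{alg:line:p2} returns it correctly.

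It remains to argue that the merge via $\phi$ produces $\calP$. Each $t \in T_1$ sits inside a unique part of $\calP_1$ and $\phi(t) \in T_2$ sits inside a unique part of $\calP_2$; because $\phi(t)$ is a friend of $t$ (same original $P_i$), identifying these two parts along the $k$ edges $t \mapsto \phi(t)$ stitches each $P_i \cap B$ to $P_i \cap (V\setminus B)$, yielding exactly $\calP$. For the query bound, each call to $\RANK_1$ or $\RANK_2$ costs two $\RANK$ queries, and by \Cref{thm:partition} the two {\sc FindPartition} invocations use $O(|B|)$ and $O(|V\setminus B|)$ such internal queries, so the total is $O(n)$ $\RANK$ queries. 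The only subtle point in the plan is confirming that both restricted partitions carry all $k$ parts (so that {\sc FindPartition} is being applied to a valid instance), but this is handled immediately by $1 \leq r_i < |P_i|$.
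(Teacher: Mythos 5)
Your proof is correct and takes essentially the same approach as the paper: show that $\RANK_1$ and $\RANK_2$ simulate a simple-partition-matroid rank on $B$ and $V\setminus B$ respectively, invoke \Cref{thm:partition}, and glue via $\phi$. You fill in some details the paper leaves implicit (the explicit case split using $|B\cap P_i|=r_i$, the check that both restricted partitions have all $k$ parts), but the argument is the same.
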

\begin{proof}
	The main idea is that the representatives allow us to simulate a simple partition matroid rank query on the basis and outside.
	More precisely, we claim that for any subset $S\subseteq B$,
	$
		\RANK_1(S) = \sum_{i=1}^k \min(|S\cap P_i|, 1).
	$
	If so, the correctness of~\Cref{alg:line:p1} follows from~\Cref{thm:partition}.
	Indeed, $\RANK(B - S + T_2) - \RANK(B-S)$ gives $+1$ for each part where $B-S$ loses at least one element to which the unique 
	element of $T_2$ contributes. Similarly, one argues that for any $S\subseteq V\setminus B$,
	$
	\RANK_2(S) = \sum_{i=1}^k \min(|S\cap P_i|, 1).
	$
	This is also for a similar reason; $B-T_1$ loses exactly one element from each part and so $\RANK(B+S-T_1) - \RANK(B-T_1)$ counts
	the parts that $S$ intersects at least once. See \Cref{fig2} for an illustration. 
%
\end{proof}

\begin{figure}[h]
    \centering
    \includegraphics[width=0.75\textwidth]{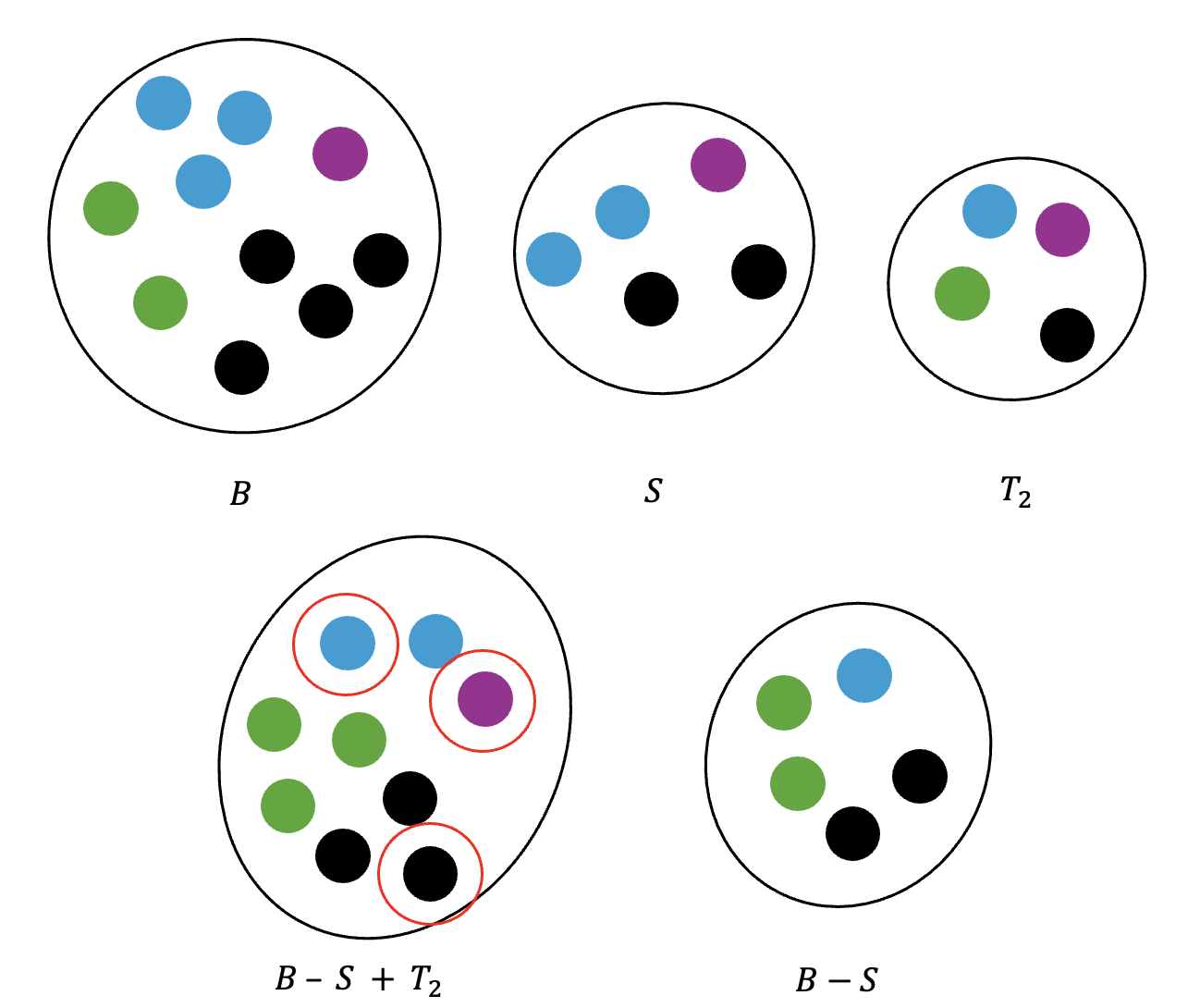}
    \caption{Illustration of how we simulate a simple partition matroid rank query inside of a basis with representatives outside. We have a basis with $b_i$s equal to $1$ (purple nodes), $2$ (green), $3$ (blue) and $4$ (black) respectively. $\RANK(B) = 10$, $\RANK(B- S) = 5$. Note $|B-S+T_2| = 10$, yet $\RANK(B-S+T_2) = 9$ because the number of green nodes is capped at 2. $\RANK(B-S+T_2) - \RANK(B-S) = 3$ simulate a simple partition matroid rank query for $S$. The circled nodes correspond to the 3 partitions included in $S$.}
    \label{fig2}
\end{figure}

\subsection{Finding Representatives via Binary Search}\label{sec:findrep}

We now describe a procedure which takes a basis $B$ of the partition matroid, and finds two subsets of representatives $T_1 \subseteq B$ and $T_2\cap B =\emptyset$.
The idea behind is a delicate binary search.
Fix some $e \in V \backslash B $ and without loss of generality,
say $e \in P_1$. We now show how to find one element in $B \cap P_1$ in $O(\log r)$ queries. The way to do it is by halving $B$ to $X_1 \sqcup X_2$ and keep the half with at least $P_1$ element in it as ``search half''. This can be checked by seeing whether $\RANKP(X_1 + \{e\}) = \RANKP(X_1)$: if so then $X_1$ contains all element of $P_1$ and this is the half we stick with; otherwise $X_2$ contains some $P_1$ element and takes precedence. The other half is now added to the new ``test half". We keep searching in our search set until it has exactly $1$ class $P_1$ element left. For instance, say $X_2$ is further divided to $X_{21}$ and $X_{22}$. The next query would be to check if $\RANKP((X_1 + \{e\}) + X_{21}) = \RANKP(X_1 + X_{21})$. If so, then $X_1 + X_{21}$ contains all class $P_1$ elements, and so will make $X_{21}$ our new ``search set''; otherwise, we continue on $X_{22}$. We describe the pseudocode in detail in~\Cref{alg:findreps}.

\begin{algorithm}[ht!]
	\caption{Finding Representatives}\label{alg:findreps}
	%
	\begin{varwidth}{\dimexpr\linewidth-2\fboxsep-2\fboxrule\relax}
		\begin{algorithmic}[1]
			\Procedure{FindRepresentatives}{$V,\RANK, B$}:
			\LineComment{Input: $n$ elements in $V$ with $\RANK$ query; basis $B$}
			\LineComment{Output: Sets of Representatives $T_1\subseteq B$, $T_2\cap B = \emptyset$ and map $\phi:T_1 \to T_2$.}
			\State $T_1,T_2 \eq \emptyset$.
			\For{$e\in V\setminus B$}:
				\If{$\RANK(B - T_1 + \{e\}) = \RANK(B - T_1)$}:\Comment{$e$ is an element with no friends in $T_1$ and $T_2$}: \label{alg:fr:line:if}
					\State $T_2 \eq T_2 + e$.
					\State $X\eq B$; $Y\eq \emptyset$.
					\While{$|X| > 1$}:
						\LineComment{Invariant: $X$ has at least one element in same part as $e$} \label{alg:fr:line:inv}
						\LineComment{Invariant: $X\cup Y = B$} \label{alg:fr:line:inv2}
						\State $(X_1, X_2) \eq$ arbitrary equipartition of $X$.
						\If{$\RANK(Y+X_1+e) = \RANK(Y+X_1)$}: \Comment{$X_2$ contains no friends of $e$} \label{alg:fr:line:if-1}
							\State $Y\eq Y + X_2$; ~~$X \eq X_1$.
						\Else: \Comment{$X_2$ contains at least one friend of $e$} \label{alg:fr:line:if-2}
							\State $Y \eq Y + X_1$; ~~$X \eq X_2$.
						\EndIf
					\EndWhile
					\LineComment{$X$ is a singleton element of $B$; let $X = \{x\}$}
					\State $T_1 \eq T_1 + x$; Set $\phi(x) = e$.
				\EndIf
			\EndFor
			\State \Return $(T_1, T_2, \phi)$.
			\EndProcedure
		\end{algorithmic}
	\end{varwidth}%
\end{algorithm}

\begin{claim}\label{clm:findreps}
	\Cref{alg:findreps} returns $(T_1, T_2, \phi)$ correctly and makes $O(n + k\log r)$ many $\RANK$ queries.
\end{claim}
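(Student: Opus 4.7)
The plan is to verify correctness of the algorithm first and then count queries, with the main work being a case analysis of how the rank function behaves under the two conditionals the algorithm tests.

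First I would analyze the outer conditional on line~\ref{alg:fr:line:if}, showing that it fires exactly when $e$'s part is not yet represented in $T_1$. Writing $e \in P_i$, parts $P_j$ with $j \neq i$ contribute identically to $\RANK(B - T_1)$ and $\RANK(B - T_1 + e)$, so only $P_i$ matters. Because $B$ is a basis, $|B \cap P_i| = r_i$, and by construction $|T_1 \cap P_i| \in \{0,1\}$. In the ``already-seen'' case $|T_1 \cap P_i| = 1$, so $|(B - T_1) \cap P_i| = r_i - 1$ and adding $e$ bumps the rank by $1$; in the ``unseen'' case the contribution is already capped at $r_i$ and the rank is unchanged. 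Hence the if-condition is equivalent to ``$P_i$ is unseen''.

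Next I would verify by induction the loop invariants stated on lines~\ref{alg:fr:line:inv}--\ref{alg:fr:line:inv2}, namely $X \cup Y = B$ and $X \cap P_i \neq \emptyset$. The inductive step applies the same capped-rank case analysis to the set $Y + X_1$: the check $\RANK(Y + X_1 + e) = \RANK(Y + X_1)$ holds iff $(Y + X_1) \cap P_i$ already contains all $r_i$ elements of $B \cap P_i$, which (using $Y \cup X_1 \cup X_2 = B$) is equivalent to $X_2 \cap P_i = \emptyset$. So the branch on line~\ref{alg:fr:line:if-1} correctly restricts $X$ to the half still containing a $P_i$-element, and likewise for the branch on line~\ref{alg:fr:line:if-2}. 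Since $|X|$ halves each iteration, the loop terminates after $O(\log r)$ iterations at a singleton $\{x\} = X$ with $x \in P_i$, making $\phi(x) = e$ the desired pairing. Finally, since $r_i < |P_i|$ for every $i$, each part has at least one element in $V \setminus B$ and is therefore encountered by the outer loop, so $T_1$ and $T_2$ are genuine representative sets.

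For the query count the bookkeeping is straightforward: the outer conditional uses $O(1)$ RANK queries per $e \in V \setminus B$ (caching $\RANK(B - T_1)$ between updates of $T_1$ suffices for a single query per element), contributing $O(n)$ in total, and the binary search fires only when a new part is discovered and hence at most $k$ times, using one query per halving of $X$ for $O(\log r)$ queries per run. Summing gives $O(n + k \log r)$. The only non-routine content, and thus the main obstacle, is the capped-rank case analysis establishing that rank equality corresponds to ``$P_i$ is fully covered on the queried side.'' Once this principle is in hand for both the outer conditional and for the inner loop step, the rest of the argument is invariant checking and summing.
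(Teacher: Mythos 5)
Your proof is correct and follows essentially the same route as the paper: a capped-rank case analysis of the two conditionals, verification of the loop invariants on $X\cup Y=B$ and $X\cap P_i\neq\emptyset$, and a query count of $O(n)$ for the outer loop plus $O(\log r)$ per binary search fired at most $k$ times. The only cosmetic difference is that the paper observes that $\RANK(B-T_1)$ and $\RANK(Y+X_1)$ need not be queried at all (these sets are subsets of the basis $B$, hence independent with rank equal to cardinality), whereas you reach the same bound by caching; both are fine.
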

\begin{proof}
	The proof is by induction: we claim that $T_1, T_2$ contains at most one element from each part and the size $|T_1| = |T_2|$ equals the number of 
	parts spanned by the elements seen by the outer for-loop. And furthermore, the $\phi$-relation is correct.
	This is obviously true before anything occurs, and consider the for-loop for and element $e$.
	Now suppose the if-statement in~\Cref{alg:fr:line:if} is {\em not} true; that is, say $\RANK(B - T_1  + e) > \RANK(B - T_1)$.
	This would mean that $e$ contains a friend in $T_1$; the only way the rank could increase is if $e$ filled the ``hole'' in the part which has exactly one element missing in $B-T_1$.
	We discard this $e$. On the other hand if the if-statement holds, then we will discover a new part in $B$ and thus by inductive hypothesis, in $V\setminus B$.
	We therefore add $e$ to $T_2$.
	
	Now consider the invariant in~\Cref{alg:fr:line:inv}. If that indeed holds true, then when the while-loop terminates, and it does so with $|X| = 1$, 
	the single element $x \in X$ must be in the same part as $e$. Thus, adding $x\in T_1$ and setting $\phi(x) = e$ is the correct thing to do.
	To see that the invariant in~\Cref{alg:fr:line:inv} holds, we include the invariant in~\Cref{alg:fr:line:inv2}. This is readily checked in both the ``then'' and ``else'' case
	of the forthcoming if-statement. If the~\Cref{alg:fr:line:if-1} holds true, then akin to the argument above, $Y+X_1$ contains all friends of $e$ in $B$.
	So, we focus our search on $X_1$ since it contains at least one friend of $e$ because, by invariant, $X$ contained at least one friend of $e$. So setting $X$ to $X_1$ keeps
	the invariant satisfied.
	On the other hand, if~\Cref{alg:fr:line:if-1} doesn't hold true, then $X_2$ must contain at least one friend of $e$. And so setting $X$ to $X_2$ keeps the invariant fulfilled. \smallskip
	
	\noindent
	To find the number of queries is simple. First notice that only~\Cref{alg:fr:line:if} and~\Cref{alg:fr:line:if-1} make any queries. And even then one of them is superfluous.
	More precisely, since $B-T_1$ and $Y+X_1$ are independent sets, their rank is $|B| - |T_1|$ and $|Y| + |X_1|$ respectively. 
	We make $n-r$ queries in line~\Cref{alg:fr:line:if-1}. Of these at most $k$ many satisfy the condition. Each of them leads to a binary-search style argument 
	which takes at most $\ceil{\log_2 r}$ many queries.
\end{proof}
\noindent
{\em Remark.
Note that~\Cref{alg:fr:line:if} and~\Cref{alg:fr:line:if-1} can be implemented using only independence-oracle queries since they are really asking, respectively, if 
$B-T_1 + e$ and $Y+X_1+e$ are independent or not; if the ranks are equal, they are not. This also implies an $O(n\log k)$ algorithm to learn the partition matroid
using only independence oracle as alluded to in the Introduction. 
	Let $|T_1| = |T_2| = k$.
Once we have the representative sets $T_1 \subseteq B$ and $T_2 \cap B = \emptyset$, for any element $e\notin V\setminus B$, we can use a binary-search style argument on $T_1$ to find
$e$'s friend among $T_1$ in $O(\log k)$ many independence oracle queries. More precisely, we halve $T_1$ into $(X,Y)$ and check if $B - X + e$ is independent or not. If it is, then $X$ contains $e$'s friend; otherwise, $Y$ does. Similarly, for any $e\in B$, we can find $e$'s friend in $T_2$ in $O(\log_2 k)$ many independence oracle queries.
}
\begin{algorithm}[ht!]
	\caption{Learning a Partition Matroid}\label{alg:learnpartmat}
	%
	\begin{varwidth}{\dimexpr\linewidth-2\fboxsep-2\fboxrule\relax}
		\begin{algorithmic}[1]
			\Procedure{LearnPartition}{$V,\RANK$}:
			\LineComment{Input: partition matroid on $n$ elements in $V$ with $\RANK$ query}
			\LineComment{Output: the partition $\calP$ and $r_i$'s}
			\State Learn a basis $B$ using {\sc FindBasis}($V,\RANK$) a la~\Cref{alg:basis}. \label{alg:final:1}  
			\State $(T_1,T_2,\phi) \eq $ {\sc FindRepresentatives} ($V,B,\RANK$) a la~\Cref{alg:findreps}. \label{alg:final:2}
			\State \Return $(\calP, \{r_i\}) \eq$ {\sc LearnMatroidWithReps}$(V,\RANK,B,T_1,T_2,\phi)$ a la~\Cref{alg:learnpartmatwreps}. \label{alg:final:3}
			\EndProcedure
		\end{algorithmic}
	\end{varwidth}%
\end{algorithm}

For completeness, we end the section by giving the pseudocode for the final algorithm in~\Cref{alg:learnpartmat}.
\Cref{lem:basis} establishes that \Cref{alg:final:1} makes $n$ $\RANK$ queries, \Cref{clm:findreps} establishes that \Cref{alg:final:2} makes
$n + k\log r$ $\RANK$ (in fact independence oracle) queries, and~\Cref{clm:lpwr} establishes that \Cref{alg:final:3} makes $O(n)$ $\RANK$ queries.
This completes the proof of~\Cref{thm:matroid}.

\section{Conclusion}

In this paper we looked at the question of learning a hidden partition using rank queries which given a subset tells how many different parts it hits.
We gave a simple but non-trivial, deterministic, and efficient algorithm which makes $O(n)$-rank queries. This is optimal up to constant factors. 
The main non-triviality arises in the use of techniques devised in coin-weighing algorithms a la~\cite{Cantor,Lindstrom}, and our work falls in a growing line 
of such results~\cite{GrebinskiK00,ChoiK10,BshoutyM10,Apers,ChakrabartyL23,Liao24a} which explores the use of these techniques to solve combinatorial search problems. 

The obvious question left open by our paper is whether there are $O(n)$ algorithms to learn general partition matroids especially when $k = \Theta(n)$.
We have not been able to directly port the coin-weighing techniques to solve this problem even in the case of $r_i = 2$ for all $i$. The main technical challenge
that the rank query, ultimately, is not a linear query and in~\Cref{sec:genpart} we could make it ``behave linear'' with the help of representatives.
Our algorithm to find representatives, however, didn't utilize the ``more information'' given by rank-queries over independence-oracle queries. Investigating this may lead to 
new algorithmic primitives. On the other hand, perhaps there is a $\omega(n)$ lower bound for this problem when $k = \Theta(n)$.	

\bibliographystyle{alpha}
\bibliography{main}
\end{document}